\newtheorem{thm}{Theorem}[section]
\newtheorem{lem}[thm]{Lemma}
\newtheorem{rem}[thm]{Remark}
\numberwithin{equation}{section}
\newcommand{\vep}{\varepsilon}
\newcommand{\wh}{\widehat}
\newcommand{\GradPVI}{
\begin{figure}
\begin{center}
	\begin{picture}(80,70)
		\put(13,8){\circle*{6}}\put(0,5){\small$0$}
		\put(15,10){\line(1,1){20}}
		\put(13,56){\circle*{6}}\put(0,53){\small$1$}
		\put(15,54){\line(1,-1){20}}
		\put(37,32){\circle{6}}\put(34,17){\small$2$}
		\put(39,30){\line(1,-1){20}}
		\put(61,8){\circle*{6}}\put(67,5){\small$4$}
		\put(39,34){\line(1,1){20}}
		\put(61,56){\circle*{6}}\put(67,53){\small$3$}
	\end{picture}
\caption{Gradation of $\mathfrak{g}(D^{(1)}_4)$ of type $(1,1,0,1,1)$}
\end{center}
\end{figure}
}
\newcommand{\GradSas}{
\begin{figure}
\begin{center}
	\begin{picture}(170,70)
		\put(13,8){\circle*{6}}\put(0,5){\small$0$}
		\put(15,10){\line(1,1){20}}
		\put(13,56){\circle*{6}}\put(0,53){\small$1$}
		\put(15,54){\line(1,-1){20}}
		\put(37,32){\circle{6}}\put(34,17){\small$2$}
		\put(40,32){\line(1,0){20}}
		\put(63,32){\circle*{6}}\put(60,17){\small$3$}
		\put(66,32){\line(1,0){15}}
		\multiput(81,32)(4,0){5}{\line(1,0){2}}
		\put(101,32){\line(1,0){15}}
		\put(119,32){\circle{6}}\put(113,17){\small$2n$}
		\put(121,30){\line(1,-1){20}}
		\put(143,8){\circle*{6}}\put(149,5){\small$2n+2$}
		\put(121,34){\line(1,1){20}}
		\put(143,56){\circle*{6}}\put(149,53){\small$2n+1$}
	\end{picture}
\caption{Gradation of $\mathfrak{g}(D^{(1)}_{2n+2})$ of type
$(1,1,0,1,0,\ldots,1,0,1,1)$}
\end{center}
\end{figure}
}
\title{Higher order Painlev\'{e} system of type $D^{(1)}_{2n+2}$ arising
from integrable hierarchy}
\author{Kenta Fuji and Takao Suzuki\\
{\small Department of Mathematics, Kobe University}\\
{\small Rokko, Kobe 657-8501, Japan}}
\date{}
\begin{document}

\maketitle

\begin{abstract}
A higher order Painlev\'{e} system of type $D^{(1)}_{2n+2}$ was introduced
by Y. Sasano.
It is an extension of the sixth Painlev\'{e} equation ($P_{\rm{VI}}$) for
the affine Weyl group symmetry.
It is also expressed as a Hamiltonian system of order $2n$ with a coupled
Hamiltonian of $P_{\rm{VI}}$.
In this paper, we discuss a derivation of this system from a
Drinfeld-Sokolov hierarchy.
\end{abstract}

\section{Introduction}

The Drinfeld-Sokolov hierarchies are extensions of the KdV (or mKdV)
hierarchy for the affine Lie algebras \cite{DS}.
It is known that they imply several Painlev\'{e} equations by similarity
reduction \cite{AS,FS,KK1,KIK,KK2}.
On the other hand, two types of extensions of the Painlev\'{e} equations
for the affine Weyl group symmetry have been studied, type $A^{(1)}_n$
\cite{NY1} and type $D^{(1)}_{2n+2}$ \cite{S}.
For type $A^{(1)}_n$ among them, the relation to the Drinfeld-Sokolov
hierarchies is already clarified.
In this paper, we investigate the relation for type $D^{(1)}_{2n+2}$.

Recall that the higher order Painlev\'{e} system of type $D^{(1)}_{2n+2}$
given in \cite{S} is a Hamiltonian system of order $2n$ with a coupled
Hamiltonian of $P_{\rm{VI}}$.
Let $q_i$, $p_i$ $(i=1,\ldots,n)$ be dependent variables on $s$ and
$\alpha_i$ $(i=0,\ldots,2n+2)$ complex parameters satisfying
\[
	\alpha_0 + \alpha_1 + \sum_{j=2}^{2n}2\alpha_j + \alpha_{2n+1}
	+ \alpha_{2n+2} = 1.
\]
We also set
\[\begin{split}
	H_i &= q_i(q_i-1)(q_i-s)p_i^2 - \{(\beta_{i,1}-1)q_i(q_i-1)\\
	&\quad +\beta_{i,3}(q_i-1)(q_i-s)+\beta_{i,4}q_i(q_i-s)\}p_i
	+ \alpha_{2i}(\alpha_{2i}+\beta_{i,0})q_i,
\end{split}\]
for $i=1,\ldots,n$, where
\[\begin{split}
	&\beta_{i,0} = \alpha_1 + \sum_{j=1}^{i-1}\alpha_{2j+1},\quad
	\beta_{i,1} = \alpha_0 + \sum_{j=1}^{i-1}2\alpha_{2j}
	+ \sum_{j=1}^{i-1}\alpha_{2j+1},\\
	&\beta_{i,3} = \sum_{j=i}^{n-1}\alpha_{2j+1}
	+ \sum_{j=i+1}^{n}2\alpha_{2j} + \alpha_{2n+1},\quad
	\beta_{i,4} = \sum_{j=i}^{n-1}\alpha_{2j+1} + \alpha_{2n+2}.
\end{split}\]
We consider a Hamiltonian system
\begin{equation}\label{Sys:Sasano}
	s(s-1)\frac{dq_i}{ds}=\{H,q_i\},\quad
	s(s-1)\frac{dp_i}{ds}=\{H,p_i\}\quad (i=1,\ldots,n),
\end{equation}
with a Hamiltonian
\begin{equation}\label{Ham:Sasano}
	H = \sum_{i=1}^{n}H_{i}
	+ \sum_{1\leq i<j\leq n}2(q_i-s)p_iq_j\{(q_j-1)p_j+\alpha_{2j}\},
\end{equation}
where $\{\cdot,\cdot\}$ stands for the Poisson bracket defined by
\[
	\{p_i,q_j\} = \delta_{i,j},\quad \{p_i,p_j\} = \{q_i,q_j\} = 0\quad
	(i,j=1,\ldots,n).
\]
Note that each $H_i$ is equivalent to the Hamiltonian of $P_{\rm{VI}}$
(see \cite{IKSY}).
In fact, the parameters satisfy the following relations:
\[
	\beta_{i,0}+\beta_{i,1}+2\alpha_{2i}+\beta_{i,3}+\beta_{i,4}=1\quad
	(i=1,\ldots,n).
\]

The system \eqref{Sys:Sasano} with \eqref{Ham:Sasano} admits affine Weyl
group symmetry of type $D^{(1)}_{2n+2}$.
Denoting the dependent variables by
\[\begin{split}
	&\varphi_0 = \frac{1}{2n+2},\quad \varphi_1 = q_1-s,\quad
	\varphi_{2i+1} = q_{i+1}-q_i\quad (i=1,\ldots,n-1),\\
	&\varphi_{2j} = -\frac{p_j}{2n+2}\quad (j=1,\ldots,n),\quad
	\varphi_{2n+1} = 1-q_n,\quad \varphi_{2n+2} = -q_n,
\end{split}\]
we consider birational canonical transformations
\begin{equation}\label{Aff_Wey_Sym}
	r_i(\alpha_j) = \alpha_j - a_{ij}\alpha_i,\quad
	r_i(\varphi_j) = \varphi_j + \frac{\alpha_i}{\varphi_i}
	\{\varphi_i,\varphi_j\},
\end{equation}
for $i,j=0,\ldots,2n+2$, where
\[\begin{array}{llll}
	a_{ii}=2& (i=0,\ldots,2n+2),\\[4pt]
	a_{02}=a_{ii+1}=a_{2n2n+2}=-1& (i=1,\ldots,2n),\\[4pt]
	a_{ij}=0& (\text{otherwise}).
\end{array}\]
Then the system \eqref{Sys:Sasano} with \eqref{Ham:Sasano} is invariant
under the action of them.
Furthermore, a group of symmetries $\langle r_0,\ldots,r_{2n+2}\rangle$
is isomorphic to the affine Weyl group of type $D^{(1)}_{2n+2}$.

In this paper, we show that the system \eqref{Sys:Sasano} with
\eqref{Ham:Sasano} is derived from a Drinfeld-Sokolov hierarchy by
similarity reduction.
The Drinfeld-Sokolov hierarchies are characterized by graded Heisenberg
subalgebras of the affine Lie algebras.
For a derivation of \eqref{Sys:Sasano}, we choose the affine Lie algebra 
$\mathfrak{g}(D_{2n+2}^{(1)})$ and its graded Heisenberg subalgebra of
type $(1,1,0,1,0,\ldots,1,0,1,1)$.
It is suggested by the fact that $P_{\rm{VI}}$ is derived from the hierarchy
associated with the graded Heisenberg subalgebra of
$\mathfrak{g}(D_4^{(1)})$ of type $(1,1,0,1,1)$.\GradPVI\GradSas

This paper is organaized as follows.
In Section \ref{Sec:Lie_alg}, we recall the affine Lie algebra
$\mathfrak{g}(D^{(1)}_{2n+2})$ and its graded Heisenberg subalgebra.
In Section \ref{Sec:DS}, we formulate a similarity reduction of a
Drinfeld-Sokolov hierarchy of type $D^{(1)}_{2n+2}$.
In Section \ref{Sec:CP6}, we derive the system \eqref{Sys:Sasano} with
\eqref{Ham:Sasano} from the similarity reduction.
In Section \ref{Sec:Aff_Wey}, we discuss a derivation of the group of
symmetries \eqref{Aff_Wey_Sym}.

\section{Affine Lie algebra}\label{Sec:Lie_alg}

In this section, we introduce the affine Lie algebra of type
$D^{(1)}_{2n+2}$ and its Heisenberg subalgebra of type
$(1,1,0,1,0,\ldots,1,0,1,1)$, following the notation of \cite{Kac}.

Recall that $\mathfrak{g}=\mathfrak{g}(D^{(1)}_{2n+2})$ is a Lie algebra
generated by the Chevalley generators $e_i$, $f_i$, $\alpha_i^{\vee}$
$(i=0,\ldots,2n+2)$ and the scaling element $d$ with the fundamental
relations
\[\begin{split}
	&(\mathrm{ad}e_i)^{1-a_{ij}}(e_j)=0,\quad
	(\mathrm{ad}f_i)^{1-a_{ij}}(f_j)=0\quad (i\neq j),\\
	&[\alpha_i^{\vee},\alpha_j^{\vee}]=0,\quad
	[\alpha_i^{\vee},e_j]=a_{ij}e_j,\quad
	[\alpha_i^{\vee},f_j]=-a_{ij}f_j,\quad
	[e_i,f_j]=\delta_{i,j}\alpha_i^{\vee},\\
	&[d,\alpha_i^{\vee}]=0,\quad [d,e_i]=\delta_{i,0}e_0,\quad
	[d,f_i]=-\delta_{i,0}f_0,
\end{split}\]
for $i,j=0,\ldots,2n+2$.
The generalized Cartan matrix $A=\left(a_{ij}\right)_{i,j=0}^{2n+2}$
for $\mathfrak{g}$ is defined by
\[\begin{array}{llll}
	a_{ii}=2& (i=0,\ldots,2n+2),\\[4pt]
	a_{02}=a_{ii+1}=a_{2n2n+2}=-1& (i=1,\ldots,2n),\\[4pt]
	a_{ij}=0& (\text{otherwise}).
\end{array}\]
We denote the Cartan subalgebra of $\mathfrak{g}$ by
\[
	\mathfrak{h} = \bigoplus_{j=0}^{2n+2}\mathbb{C}\alpha_j^{\vee}
	\oplus\mathbb{C}d.
\]
The canonical central element of $\mathfrak{g}$ is given by
\[
	K = \alpha_0^{\vee} + \alpha_1^{\vee} + \sum_{i=2}^{2n}2\alpha_i^{\vee}
	+ \alpha_{2n+1}^{\vee} + \alpha_{2n+2}^{\vee}.
\]
The normalized invariant form
$(\cdot|\cdot):\mathfrak{g}\times\mathfrak{g}\to\mathbb{C}$ is determined
by the conditions
\[\begin{array}{lll}
	(\alpha_i^{\vee}|\alpha_j^{\vee}) = a_{ij},& (e_i|f_j) = \delta_{i,j},&
	(\alpha_i^{\vee}|e_j) = (\alpha_i^{\vee}|f_j) = 0,\\[4pt]
	(d|d) = 0,& (d|\alpha_j^{\vee}) = \delta_{0,j},& (d|e_j) = (d|f_j) = 0,
\end{array}\]
for $i,j=0,\ldots,2n+2$.

Consider a gradation $\mathfrak{g}=\bigoplus_{k\in\mathbb{Z}}\mathfrak{g}_k$
of type $(1,1,0,1,0,\ldots,1,0,1,1)$ by setting
\[\begin{array}{ll}
	\deg\mathfrak{h}=\deg e_i=\deg f_i=0& (i\in\mathcal{I}),\\[4pt]
	\deg e_j=1,\quad \deg f_j=-1& (j\in\mathcal{J}),
\end{array}\]
where $\mathcal{I}=\{2,4,\ldots,2n\}$ and
$\mathcal{J}=\{0,1,3,5,\ldots,2n+1,2n+2\}$.
With an element $\vartheta\in\mathfrak{h}$ such that
\[
	(\vartheta|\alpha_i^{\vee}) = 0,\quad
	(\vartheta|\alpha_j^{\vee}) = 1\quad (i\in\mathcal{I}; j\in\mathcal{J}),
\]
this gradation is defined by
\[
	\mathfrak{g}_k
	= \left\{x\in\mathfrak{g}\bigm|[\vartheta,x]=kx\right\}\quad
	(k\in\mathbb{Z}).
\]
We denote by
\[
	\mathfrak{g}_{<0} = \bigoplus_{k<0}\mathfrak{g}_{k},\quad
	\mathfrak{g}_{\geq0} = \bigoplus_{k\geq0}\mathfrak{g}_{k}.
\]

Such gradation implies the Heisenberg subalgebra of $\mathfrak{g}$
\[
	\mathfrak{s}
	= \{x\in\mathfrak{g}\bigm|[x,\Lambda_1]=[x,\Lambda_2]=\mathbb{C}K\},
\]
with elements of $\mathfrak{g}_1$
\[\begin{split}
	\Lambda_1 &= e_0 + e_{1,2} + \sum_{j\in\mathcal{J}'}(e_j+e_{j-1,j,j+1})
	+ e_{2n+1} + e_{2n,2n+2},\\
	\Lambda_2 &= e_1 + e_{0,2} + \sum_{j\in\mathcal{J}'}(e_{j-1,j}+e_{j,j+1})
	+ e_{2n+2} + e_{2n,2n+1},
\end{split}\]
where $\mathcal{J}'=\{3,5,\ldots,2n-1\}$ and
\[
	e_{i_1,i_2,\ldots,i_{n-1},i_n} = \mathrm{ad}e_{i_1}\mathrm{ad}e_{i_2}
	\ldots\mathrm{ad}e_{i_{n-1}}(e_{i_n}).
\]
Note that $\mathfrak{s}$ admits the gradation of type
$(1,1,0,1,0,\ldots,1,0,1,1)$, namely
\[
	\mathfrak{s} = \bigoplus_{k\in\mathbb{Z}}\mathfrak{s}_k,\quad
	\mathfrak{s}_k\subset\mathfrak{g}_k.
\]
We also remark that the positive part of $\mathfrak{s}$ has a graded bases
$\left\{\Lambda_k\right\}_{k=1}^{\infty}$ satisfying
\[
	[\Lambda_k,\Lambda_l] = 0,\quad
	[\vartheta,\Lambda_k] = n_k\Lambda_k\quad (k,l=1,2,\ldots),
\]
where $n_k$ stands for the degree of element $\Lambda_k$ defined by
\[
	n_k = \left\{\begin{array}{ll}
		k& (k:\text{odd})\\[4pt]
		k-1	& (k:\text{even})
	\end{array}\right..
\]
The explicit formulas of $\Lambda_k$ $(k\geq3)$ are given in Appendix
\ref{Sec:Hei_subalg}.

In the last, we introduce the Borel subalgebra of $\mathfrak{g}$.
Let $\mathfrak{n}_{+}$ and $\mathfrak{n}_{-}$ be the subalgebras of
$\mathfrak{g}$ generated by $e_i$ and $f_i$ $(i=0,\ldots,2n+2)$
respectively.
Then the Borel subalgebra $\mathfrak{b}_{+}$ of $\mathfrak{g}$ is defined
by $\mathfrak{b}_{+}=\mathfrak{h}\oplus\mathfrak{n}_{+}$.
Note that we have the triangular decomposition
\[
	\mathfrak{g} = \mathfrak{n}_{-}\oplus\mathfrak{h}\oplus\mathfrak{n}_{+}
	= \mathfrak{n}_{-}\oplus\mathfrak{b}_{+}.
\]
We also remark that
\[
	\mathfrak{n}_{-} = \mathfrak{g}_{<0}
	\oplus\bigoplus_{i\in\mathcal{I}}\mathbb{C}f_i,\quad
	\mathfrak{g}_{\geq0} = \bigoplus_{i\in\mathcal{I}}\mathbb{C}f_i
	\oplus\mathfrak{b}_{+}.
\]

\section{Drinfeld-Sokolov hierarchy}\label{Sec:DS}

In this section, we formulate a Drinfeld-Sokolov hierarchy of type
$D^{(1)}_{2n+2}$ and its similarity reduction associated with the
Heisenberg subalgebra $\mathfrak{s}$.

In the following, we use the notation of infinite dimensional groups
\[
	G_{<0} = \exp(\wh{\mathfrak{g}}_{<0}),\quad
	G_{\geq0} = \exp(\wh{\mathfrak{g}}_{\geq0}),
\]
where $\wh{\mathfrak{g}}_{<0}$ and $\wh{\mathfrak{g}}_{\geq0}$ are
completions of $\mathfrak{g}_{<0}$ and $\mathfrak{g}_{\geq0}$ respectively.

Introducing the time variables $t_k$ $(k=1,2,\ldots)$, we consider a system
of partial differential equations
\begin{equation}\label{Eq:Sato}
	\partial_{t_k} - B_k = W(\partial_{t_k}-\Lambda_k)W^{-1}\quad
	(k=1,2,\ldots),
\end{equation}
for a $G_{<0}$-valued function $W$, where $B_k$ stands for the
$\mathfrak{g}_{\geq0}$-component of $W\Lambda_kW^{-1}$.
The Zakharov-Shabat equations
\begin{equation}\label{ZS_DS}
	[\partial_{t_k}-B_k,\partial_{t_l}-B_l] = 0\quad (k,l=1,2,\ldots),
\end{equation}
follows from the system \eqref{Eq:Sato}.
We call the system \eqref{ZS_DS} the Drinfeld-Sokolov hierarchy of type
$D^{(1)}_{2n+2}$.

Under the system \eqref{Eq:Sato}, we consider the operator
\[
	\mathcal{M} = W
	\exp\left(\sum_{k=1,2,\ldots}t_k\Lambda_k\right)\vartheta
	\exp\left(-\sum_{k=1,2,\ldots}t_k\Lambda_k\right)W^{-1}.
\]
Then the operator $\mathcal{M}$ satisfies
\begin{equation}\label{Sim_Red}
	[\partial_{t_k}-B_k,\mathcal{M}] = 0\quad (k=1,2,\ldots).
\end{equation}
Also $\mathcal{M}$ is expressed as
\[
	\mathcal{M} = W\vartheta W^{-1}
	- \sum_{k=1,2,\ldots}n_kt_kW\Lambda_kW^{-1}.
\]

Now we require that the similarity condition
$\mathcal{M}\in\mathfrak{g}_{\geq0}$ is satisfied.
Note that it is equivalent to
\[
	\vartheta + \sum_{k=1,2,\ldots}n_kt_kB^c_k = W\vartheta W^{-1},
\]
where $B^c_k$ stands for the $\mathfrak{g}_{<0}$-component of
$W\Lambda_kW^{-1}$.
Then we have
\[
	\mathcal{M} = \vartheta - \sum_{k=1,2,\ldots}n_kt_kB_k.
\]
We also assume that $t_k=0$ for $k\geq3$.
Then the systems \eqref{ZS_DS} and \eqref{Sim_Red} are equivalent to
\begin{equation}\begin{split}\label{Sim_Red_2}
	&[\partial_{t_1}-B_1,\partial_{t_2}-B_2] = 0,\\
	&[\partial_{t_k}-B_k,\vartheta-t_1B_1-t_2B_2] = 0\quad (k=1,2).
\end{split}\end{equation}
We regard the system \eqref{Sim_Red_2} as a similarity reduction of the
Drinfeld-Sokolov hierarchy of type $D^{(1)}_{2n+2}$.

The $\mathfrak{g}_{\geq0}$-valued functions $B_k$ $(k=1,2)$ are expressed in
the form
\[
	B_k = U_k + \Lambda_k,\quad
	U_k = \sum_{i=0}^{2n+2}u_{k,i}\alpha_i^{\vee}
	+ \sum_{i\in\mathcal{I}}x_{k,i}e_i + \sum_{i\in\mathcal{I}}y_{k,i}f_i.
\]
In terms of the operators $U_k\in\mathfrak{g}_0$, this similarity reduction
can be expressed as
\[\begin{split}
	&\partial_{t_1}(U_2) - \partial_{t_2}(U_1) + [U_2,U_1] = 0,\\
	&[\Lambda_1,U_2] - [\Lambda_2,U_1] = 0,\\
	&t_1\partial_{t_1}(U_k) + t_2\partial_{t_2}(U_k) + U_k = 0\quad (k=1,2).
\end{split}\]

In the following, we use the notation of a $\mathfrak{g}_{\geq0}$-valued
1-form $\mathcal{B}=B_1dt_1+B_2dt_2$ with respect to the coordinates
$\bm{t}=(t_1,t_2)$.
Then the similarity reduction \eqref{Sim_Red_2} is expressed as
\begin{equation}\label{Sim_Red_ED}
	d_{\bm{t}}\mathcal{M} = [\mathcal{B},\mathcal{M}],\quad
	d_{\bm{t}}\mathcal{B} = \mathcal{B}\wedge\mathcal{B},
\end{equation}
where $d_{\bm{t}}$ stands for an exterior differentation with respect to
$\bm{t}$.
Denoting by
\[
	\mathcal{M}_1 = -t_1\Lambda_1 - t_2\Lambda_2,\quad
	\mathcal{B}_1 = \Lambda_1dt_1 + \Lambda_2dt_2,
\]
we can express the operators $\mathcal{M}$ and $\mathcal{B}$ in the form
\[\begin{split}
	\mathcal{M} &= \theta + \sum_{i\in\mathcal{I}}\xi_ie_i
	+ \sum_{i\in\mathcal{I}}\psi_if_i + \mathcal{M}_1,\\
	\mathcal{B} &= \bm{u} + \sum_{i\in\mathcal{I}}\bm{x}_ie_i
	+ \sum_{i\in\mathcal{I}}\bm{y}_if_i + \mathcal{B}_1,
\end{split}\]
where
\[
	\theta = \vartheta + \sum_{i=0}^{2n+2}\theta_i\alpha^{\vee}_i,\quad
	\bm{u} = \sum_{i=0}^{2n+2}\bm{u}_i\alpha^{\vee}_i.
\]
The system \eqref{Sim_Red_ED} is expressed in terms of these variables as
follows:
\[\begin{split}
	&d_{\bm{t}}\theta_i = \bm{x}_i\psi_i - \bm{y}_i\xi_i,\quad
	d_{\bm{t}}\theta_j = 0,\\
	&d_{\bm{t}}\xi_i = (\bm{u}|\alpha^{\vee}_i)\xi_i
	- \bm{x}_i(\theta|\alpha^{\vee}_i),\\
	&d_{\bm{t}}\psi_i = -(\bm{u}|\alpha^{\vee}_i)\psi_i
	+ \bm{y}_i(\theta|\alpha^{\vee}_i),
\end{split}\]
and
\[\begin{split}
	&d_{\bm{t}}\bm{u}_i = \bm{x}_i\wedge\bm{y}_i+\bm{y}_i\wedge\bm{x}_i,\quad
	d_{\bm{t}}\bm{u}_j = 0,\\
	&d_{\bm{t}}\bm{x}_i = (\bm{u}|\alpha^{\vee}_i)\wedge\bm{x}_i,\quad
	d_{\bm{t}}\bm{y}_i = -(\bm{u}|\alpha^{\vee}_i)\wedge\bm{y}_i,
\end{split}\]
for $i\in\mathcal{I}$ and $j\in\mathcal{J}$.

\section{Coupled Painlev\'{e} VI system}\label{Sec:CP6}

In this section, we show that the system \eqref{Sys:Sasano} with
\eqref{Ham:Sasano} is derived from the similarity reduction
\eqref{Sim_Red_ED}.

We introduce below a gauge transformation
\[
	\mathcal{M}^{+} = \exp(\mathrm{ad}(\Gamma))\mathcal{M},\quad
	d_{\bm{t}}-\mathcal{B}^{+}
	= \exp(\mathrm{ad}(\Gamma))(d_{\bm{t}}-\mathcal{B}),
\]
with $\Gamma\in\mathfrak{g}_0$ such that $\mathcal{M}^{+}$ and
$\mathcal{B}^{+}$ should take values in $\mathfrak{b}_{+}$.
Then the system \eqref{Sim_Red_ED} is transformed into
\[
	d_{\bm{t}}\mathcal{M}^{+} = [\mathcal{B}^{+},\mathcal{M}^{+}],\quad
	d_{\bm{t}}\mathcal{B}^{+} = \mathcal{B}^{+}\wedge\mathcal{B}^{+}.
\]
It is equivalent to the system \eqref{Sys:Sasano} with \eqref{Ham:Sasano}
under a certain transformation of variables.
We recall that the operator $\mathcal{M}$ is expressed as
\[
	\mathcal{M} = \theta + \sum_{i\in\mathcal{I}}\xi_ie_i
	+ \sum_{i\in\mathcal{I}}\psi_if_i + \mathcal{M}_1,
\]
where
\[\begin{split}
	\mathcal{M}_1 &= -t_1e_0 - t_2e_1 - \sum_{j\in\mathcal{J}'}t_1e_j
	- t_1e_{2n+1} - t_2e_{2n+2} - t_2e_{0,2} - t_1e_{1,2}\\
	&\quad - \sum_{j\in\mathcal{J}'}t_2(e_{j-1,j}+e_{j,j+1}) - t_2e_{2n,2n+1}
	- t_1e_{2n,2n+2} - \sum_{j\in\mathcal{J}'}t_1e_{j-1,j,j+1}.
\end{split}\]

We first consider a gauge transformation
\[
	\mathcal{M}' = \exp(\mathrm{ad}(\Gamma_1))\mathcal{M},\quad
	d_{\bm{t}}-\mathcal{B}'
	= \exp(\mathrm{ad}(\Gamma_1))(d_{\bm{t}}-\mathcal{B}),
\]
with $\Gamma_1=\sum_{i\in\mathcal{I}}\gamma_ie_i$ defined by
\[
	\gamma_2 = \frac{t_2}{t_1},\quad
	\gamma_{2i+2} = \frac{t_1+t_2\gamma_{2i}}{t_2+t_1\gamma_{2i}}\quad
	(i=1,\ldots,n-1).
\]
Then we obtain
\[\begin{split}
	\mathcal{M}_1' &= \exp(\mathrm{ad}(\Gamma_1))(\mathcal{M}_1)\\
	&= -t_1e_0 - t_2e_1 - \sum_{j\in\mathcal{J}'}t_1e_j - t_2e_{2n+1}
	- t_1e_{2n+2}\\
	&\quad - (t_1-t_2\gamma_2)e_{1,2}
	- \sum_{j\in\mathcal{J}'}\{(t_2+t_1\gamma_{j-1})e_{j-1,j}
	+(t_2-t_1\gamma_{j+1})e_{j,j+1}\}\\
	&\quad - (t_1+t_2\gamma_{2n})e_{2n,2n+1}
	- (t_2+t_1\gamma_{2n})e_{2n,2n+2}.
\end{split}\]

We next consider a gauge transformation
\[
	\mathcal{M}^* = \exp(\mathrm{ad}(\Gamma_2))\mathcal{M}',\quad
	d_{\bm{t}}-\mathcal{B}^*
	= \exp(\mathrm{ad}(\Gamma_2))(d_{\bm{t}}-\mathcal{B}'),
\]
with $\Gamma_2\in\mathfrak{h}$ such that
\[\begin{split}
	\mathcal{M}_1^* &= \exp(\mathrm{ad}(\Gamma_2))(\mathcal{M}_1')\\
	&= e_0 + b_1e_1 + \sum_{j\in\mathcal{J}'}b_je_j + b_{2n+1}e_{2n+1}
	+ b_{2n+2}e_{2n+2}\\
	&\quad + e_{1,2} + \sum_{j\in\mathcal{J}'}(e_{j-1,j}+e_{j,j+1})
	+ e_{2n,2n+1} + e_{2n,2n+2}.
\end{split}\]
Note that the coefficients $b_j$ are algebraic functions in $t_1$ and $t_2$.
Then We have
\begin{equation}\label{Sim_Red_ED_2}
	d_{\bm{t}}\mathcal{M}^* = [\mathcal{B}^*,\mathcal{M}^*],\quad
	d_{\bm{t}}\mathcal{B}^* = \mathcal{B}^*\wedge\mathcal{B}^*.
\end{equation}
With the notation
\[
	\mathcal{B}_1^* = \exp(\mathrm{ad}(\Gamma_2))
	\exp(\mathrm{ad}(\Gamma_1))(\mathcal{B}_1),
\]
the operators $\mathcal{M}^*$ and $\mathcal{B}^*$ are expressed in the form
\[\begin{split}
	\mathcal{M}^* &= \theta^* + \sum_{i\in\mathcal{I}}\xi^*_ie_i
	+ \sum_{i\in\mathcal{I}}\psi^*_if_i + \mathcal{M}_1^*,\\
	\mathcal{B}^* &= \bm{u}^* + \sum_{i\in\mathcal{I}}\bm{x}^*_ie_i
	+ \sum_{i\in\mathcal{I}}\bm{y}^*_if_i + \mathcal{B}_1^*,
\end{split}\]
where
\[
	\theta^* = \vartheta + \sum_{i=0}^{2n+2}\theta^*_i\alpha^{\vee}_i,\quad
	\bm{u}^* = \sum_{i=0}^{2n+2}\bm{u}^*_i\alpha^{\vee}_i.
\]

We finally consider a gauge transformation
\[
	\mathcal{M}^{+} = \exp(\mathrm{ad}(\Gamma_3))\mathcal{M}^*,\quad
	d_{\bm{t}}-\mathcal{B}^{+}
	= \exp(\mathrm{ad}(\Gamma_3))(d_{\bm{t}}-\mathcal{B}^*),
\]
with $\Gamma_3=\sum_{i\in\mathcal{I}}\eta_if_i$ such that
$\mathcal{M}^{+},\mathcal{B}^{+}\in\mathfrak{b}_{+}$, namely
\begin{equation}\label{Gauge_Trf_1}
	\xi^*_i\eta_i^2 - (\theta^*|\alpha^{\vee}_i)\eta_i - \psi^*_i = 0\quad
	(i\in\mathcal{I}),
\end{equation}
and
\begin{equation}\label{Gauge_Trf_2}
	d_{\bm{t}}\eta_i = \bm{x}^*_i\eta_i^2 - (\bm{u}^*|\alpha^{\vee}_i)\eta_i
	- \bm{y}^*_i\quad (i\in\mathcal{I}).
\end{equation}
Here we have
\begin{lem}\label{Lem:Borel}
Under the system \eqref{Sim_Red_ED_2}, the equation \eqref{Gauge_Trf_2}
follows from the equation \eqref{Gauge_Trf_1}.
\end{lem}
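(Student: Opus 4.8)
The plan is to use the evolution equations for the gauge‑transformed quantities $\xi^*_i$, $\psi^*_i$, $(\theta^*|\alpha^\vee_i)$ under the similarity reduction \eqref{Sim_Red_ED_2}, together with the known $d_{\bm t}$‑action on the coefficients of $\mathcal B^*$, to show that the one‑form \eqref{Gauge_Trf_1} is preserved by the flow. Concretely, introduce for each $i\in\mathcal I$ the function
\[
	F_i = \xi^*_i\eta_i^2 - (\theta^*|\alpha^\vee_i)\eta_i - \psi^*_i,
\]
so that \eqref{Gauge_Trf_1} reads $F_i = 0$. The strategy is to differentiate $F_i$ with respect to $\bm t$, substitute the evolution equations for $\xi^*_i$, $\psi^*_i$, $\theta^*_i$ coming from $d_{\bm t}\mathcal M^* = [\mathcal B^*,\mathcal M^*]$ (these are the starred analogues of the formulas for $d_{\bm t}\xi_i$, $d_{\bm t}\psi_i$, $d_{\bm t}\theta_i$ displayed at the end of Section \ref{Sec:DS}), and also substitute the proposed evolution \eqref{Gauge_Trf_2} for $d_{\bm t}\eta_i$. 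The claim is that after these substitutions one obtains an identity of the shape
\[
	d_{\bm t}F_i = \bigl(\text{coefficient involving }\bm x^*_i,\ \bm u^*,\ \eta_i\bigr)\cdot F_i,
\]
i.e.\ $d_{\bm t}F_i$ is proportional to $F_i$ itself. Hence if $F_i$ vanishes at one point of $\bm t$‑space it vanishes identically; but \eqref{Gauge_Trf_1} is precisely the condition that determines $\eta_i$ algebraically, so it holds by construction, and then \eqref{Gauge_Trf_2} is automatically consistent — which is the content of the lemma, read in the direction that \eqref{Gauge_Trf_1} plus the flow force \eqref{Gauge_Trf_2}.

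First I would write out explicitly, for a fixed $i\in\mathcal I$, the relevant components of the zero‑curvature equations in \eqref{Sim_Red_ED_2}. From $d_{\bm t}\mathcal M^* = [\mathcal B^*,\mathcal M^*]$ one extracts $d_{\bm t}\xi^*_i = (\bm u^*|\alpha^\vee_i)\xi^*_i - \bm x^*_i(\theta^*|\alpha^\vee_i)$, $d_{\bm t}\psi^*_i = -(\bm u^*|\alpha^\vee_i)\psi^*_i + \bm y^*_i(\theta^*|\alpha^\vee_i)$, and $d_{\bm t}(\theta^*|\alpha^\vee_i) = $ a suitable combination; one must be a little careful here because $(\theta^*|\alpha^\vee_i)$ couples through the Cartan matrix to neighbouring $\theta^*_j$ with $j\in\mathcal J$, but those satisfy $d_{\bm t}\theta^*_j=0$, so effectively only $d_{\bm t}\theta^*_i = \bm x^*_i\psi^*_i - \bm y^*_i\xi^*_i$ contributes and $d_{\bm t}(\theta^*|\alpha^\vee_i) = 2(\bm x^*_i\psi^*_i - \bm y^*_i\xi^*_i)$ after accounting for $a_{ii}=2$ and the vanishing off‑diagonal contributions. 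Then I would differentiate $F_i$, collect terms, and use $d_{\bm t}\eta_i$ from \eqref{Gauge_Trf_2} to eliminate the $\bm y^*_i$ and $\bm u^*$ terms that do not obviously cancel.

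The main obstacle I anticipate is the bookkeeping in this cancellation: after substitution there will be several monomials in $\eta_i$ of degrees $0,1,2,3$ with coefficients built from $\bm x^*_i$, $\bm y^*_i$, $\bm u^*$, $\xi^*_i$, $\psi^*_i$, $\theta^*_i$, and one needs the cubic and the remaining quadratic/linear pieces to reorganize exactly into $(\bm x^*_i\eta_i - \text{something})F_i$ (or a similar multiple). The key algebraic facts making this work are (i) the commutation relations $[\alpha^\vee_i,e_i]=2e_i$, $[\alpha^\vee_i,f_i]=-2f_i$, $[e_i,f_i]=\alpha^\vee_i$, so that the $\mathfrak{sl}_2$‑triple attached to $i\in\mathcal I$ closes, and (ii) the fact that $\mathcal B^*$ and $\mathcal M^*$ share the same Cartan‑plus‑$\{e_i,f_i\}$ structure, so that the $e_i$‑, $f_i$‑, and $\alpha^\vee_i$‑components of $d_{\bm t}\mathcal M^* = [\mathcal B^*,\mathcal M^*]$ close among the starred data indexed by this single $i$. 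I would verify the identity by matching powers of $\eta_i$; once the coefficient of $\eta_i^3$ and of $\eta_i^0$ match (these come from the $\bm x^*_i\xi^*_i$ and $\bm y^*_i\psi^*_i$ channels respectively), the intermediate powers follow by the $\mathfrak{sl}_2$ structure, and the proportionality $d_{\bm t}F_i \in (\text{function})\cdot F_i$ drops out. Since $F_i=0$ is imposed as the defining algebraic equation for $\eta_i$, this proportionality shows the relation is propagated, i.e.\ \eqref{Gauge_Trf_2} is the unique consistent evolution, which is exactly the assertion of Lemma \ref{Lem:Borel}.
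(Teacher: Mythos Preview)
Your computational setup is correct and matches the paper's: you correctly identify the relevant components of $d_{\bm t}\mathcal{M}^*=[\mathcal{B}^*,\mathcal{M}^*]$, in particular that $(d_{\bm t}\theta^*|\alpha^\vee_i)=2d_{\bm t}\theta^*_i$ because the off-diagonal Cartan neighbours lie in $\mathcal{J}$ and have $d_{\bm t}\theta^*_j=0$. The algebra you plan to carry out does close up.

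However, the logical structure of your argument is inverted. You propose to \emph{substitute} \eqref{Gauge_Trf_2} for $d_{\bm t}\eta_i$ and then verify that $d_{\bm t}F_i = cF_i$. That shows only that \eqref{Gauge_Trf_2} is \emph{consistent} with the constraint $F_i=0$ --- i.e.\ that if $\eta_i$ happens to evolve by \eqref{Gauge_Trf_2} then $F_i=0$ is preserved. The lemma asks for the opposite implication: given that $\eta_i$ is defined pointwise by $F_i=0$, prove that its $\bm t$-derivative necessarily satisfies \eqref{Gauge_Trf_2}. For this you must leave $d_{\bm t}\eta_i$ free in the differentiation of $F_i$; the result (which your computation would in fact yield after reorganisation) is
\[
d_{\bm t}F_i=\{2\xi^*_i\eta_i-(\theta^*|\alpha^\vee_i)\}\bigl\{d_{\bm t}\eta_i-\bm x^*_i\eta_i^2+(\bm u^*|\alpha^\vee_i)\eta_i+\bm y^*_i\bigr\}+\{2\bm x^*_i\eta_i-(\bm u^*|\alpha^\vee_i)\}F_i.
\]
Setting $F_i=0$ and hence $d_{\bm t}F_i=0$ then gives a \emph{dichotomy}: either \eqref{Gauge_Trf_2} holds, or $2\xi^*_i\eta_i-(\theta^*|\alpha^\vee_i)=0$. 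Your claim that \eqref{Gauge_Trf_2} is ``the unique consistent evolution'' does not follow from $d_{\bm t}F_i=cF_i$ alone; uniqueness requires the coefficient $\partial F_i/\partial\eta_i=2\xi^*_i\eta_i-(\theta^*|\alpha^\vee_i)$ to be nonzero.

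This brings up the second gap: you do not treat the degenerate branch $2\xi^*_i\eta_i=(\theta^*|\alpha^\vee_i)$, which is exactly the case where the quadratic \eqref{Gauge_Trf_1} has a double root. The paper handles this case separately: substituting $\eta_i=(\theta^*|\alpha^\vee_i)/(2\xi^*_i)$ back into $F_i=0$ gives the discriminant condition $4\xi^*_i\psi^*_i+(\theta^*|\alpha^\vee_i)^2=0$, and then a direct differentiation of $\eta_i=(\theta^*|\alpha^\vee_i)/(2\xi^*_i)$ using \eqref{Sim_Red_ED_3} together with this discriminant relation reproduces \eqref{Gauge_Trf_2}. Without this step the proof is incomplete.
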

\begin{proof}
The first equation of the system \eqref{Sim_Red_ED_2} can be expressed as
\begin{equation}\begin{split}\label{Sim_Red_ED_3}
	&d_{\bm{t}}\theta^*_i = \bm{x}^*_i\psi^*_i - \bm{y}^*_i\xi^*_i,\quad
	d_{\bm{t}}\theta^*_j = 0,\\
	&d_{\bm{t}}\xi^*_i = (\bm{u}^*|\alpha^{\vee}_i)\xi^*_i
	- \bm{x}^*_i(\theta^*|\alpha^{\vee}_i),\\
	&d_{\bm{t}}\psi^*_i = -(\bm{u}^*|\alpha^{\vee}_i)\psi^*_i
	+ \bm{y}^*_i(\theta^*|\alpha^{\vee}_i),
\end{split}\end{equation}
for $i\in\mathcal{I}$ and $j\in\mathcal{J}$.
By using \eqref{Sim_Red_ED_3} and
$(d_{\bm{t}}\theta^*|\alpha^{\vee}_i)=2d_{\bm{t}}\theta^*_i$, we obtain
\[\begin{split}
	&d_{\bm{t}}\{\xi^*_i\eta_i^2-(\theta^*|\alpha^{\vee}_i)\eta_i
	-\psi^*_i\}\\
	&= \{2\xi^*_i\eta_i-(\theta^*|\alpha^{\vee}_i)\}
	\{d_{\bm{t}}\eta_i-\bm{x}^*_i\eta_i^2+(\bm{u}^*|\alpha^{\vee}_i)\eta_i
	+\bm{y}^*_i\}\quad (i\in\mathcal{I}).
\end{split}\]
It follows that the equation \eqref{Gauge_Trf_1} implies
\eqref{Gauge_Trf_2} or
\begin{equation}\label{Prf:Lem:Borel_1}
	\eta_i = \frac{(\theta^*|\alpha^{\vee}_i)}{2\xi^*_i}\quad
	(i\in\mathcal{I}).
\end{equation}
Hence it is enough to verify that the equation \eqref{Gauge_Trf_2}
follows from \eqref{Prf:Lem:Borel_1}.
Together with \eqref{Sim_Red_ED_3}, the equation \eqref{Prf:Lem:Borel_1}
implies
\begin{equation}\begin{split}\label{Prf:Lem:Borel_2}
	d_{\bm{t}}\eta_i &= \frac{(d_{\bm{t}}\theta^*|\alpha^{\vee}_i)\xi^*_i
	-(\theta^*|\alpha^{\vee}_i)d_{\bm{t}}\xi^*_i}{2(\xi^*_i)^2}\\
	&= \bm{x}^*_i\eta_i^2 - (\bm{u}^*|\alpha^{\vee}_i)\eta_i - \bm{y}^*_i
	+ \frac{\bm{x}^*_i\{4\xi^*_i\psi^*_i+(\theta^*|\alpha^{\vee}_i)^2\}}
	{4(\xi^*_i)^2}\quad (i\in\mathcal{I}).
\end{split}\end{equation}
On the other hand, we obtain
\begin{equation}\label{Prf:Lem:Borel_3}
	4\xi^*_i\psi^*_i + (\theta^*|\alpha^{\vee}_i)^2 = 0\quad
	(i\in\mathcal{I}),
\end{equation}
by substituting \eqref{Prf:Lem:Borel_1} into \eqref{Gauge_Trf_1}.
Combining \eqref{Prf:Lem:Borel_2} and \eqref{Prf:Lem:Borel_3}, we obtain
the equation \eqref{Gauge_Trf_2}.
\end{proof}

Thanks to Lemma \ref{Lem:Borel}, the gauge parameters $\eta_i$
$(i\in\mathcal{I})$ are determined by the equation \eqref{Gauge_Trf_1}.
Hence we obtain the system on $\mathfrak{b}_{+}$
\begin{equation}\label{Sim_Red_Borel}
	d_{\bm{t}}\mathcal{M}^{+} = [\mathcal{B}^{+},\mathcal{M}^{+}],\quad
	d_{\bm{t}}\mathcal{B}^{+} = \mathcal{B}^{+}\wedge\mathcal{B}^{+},
\end{equation}
with dependent variables
\[
	\lambda_i = \eta_i - \sum_{j=1}^{i-1}b_{2j+1},\quad
	\mu_i = \varphi^*_i\quad (i\in\mathcal{I}).
\]
The operator $\mathcal{M}^{+}$ is expressed in the form
\[\begin{split}
	\mathcal{M}^{+} &= \kappa + \sum_{i\in\mathcal{I}}\mu_ie_i + e_0
	+ (c_1-\lambda_2)e_1
	+ \sum_{j\in\mathcal{J}'}(\lambda_{j-1}-\lambda_{j+1})e_j\\
	&\quad + (\lambda_{2n}-c_{2n+1})e_{2n+1}
	+ (\lambda_{2n}-c_{2n+2})e_{2n+2}\\
	&\quad + e_{1,2} + \sum_{j\in\mathcal{J}'}(e_{j-1,j}+e_{j,j+1})
	+ e_{2n,2n+1} + e_{2n,2n+2},
\end{split}\]
where $\kappa\in\mathfrak{h}$ and
\[
	c_1 = b_1,\quad
	c_i = -\sum_{j=1}^{n-1}b_{2j+1} - b_i\quad (i=2n+1,2n+2).
\]
Note that $d_{\bm{t}}\kappa=0$.
We also remark that $c_1$, $c_{2n+1}$ and $c_{2n+2}$ are algebraic functions
in $t_1$ and $t_2$.

Let
\[
	s_1 = \frac{c_{2n+2}-c_1}{2n+2},\quad
	s_2 = \frac{c_{2n+2}-c_{2n+1}}{2n+2}.
\]
We now regard the system \eqref{Sim_Red_Borel} as a system of ordinary
differential equations
\begin{equation}\label{Sim_Red_ODE}
	\left[s(s-1)\frac{d}{ds}-B,\mathcal{M}^{+}\right] = 0,
\end{equation}
with respect to the independent variable $s=s_1$ by setting $s_2=1$.
The explicit formula of the $\mathfrak{b}_{+}$-valued operator $B$ is
given below.
We also set
\[
	q_i = \frac{c_{2n+2}-\lambda_{2i}}{2n+2},\quad p_i = -\mu_{2i},\quad
	\alpha_j = \frac{(\kappa|\alpha^{\vee}_j)}{2n+2},
\]
for $i=1,\ldots,n$ and $j=0,\ldots,2n+2$.
Then we obtain
\begin{thm}
The system \eqref{Sim_Red_ODE} is equivalent to the system
\eqref{Sys:Sasano} with \eqref{Ham:Sasano}.
\end{thm}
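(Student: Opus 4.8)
The strategy is to make the gauge-transformed Lax pair $(\mathcal{M}^{+},B)$ explicit, extract from $[s(s-1)\tfrac{d}{ds}-B,\mathcal{M}^{+}]=0$ a system of first-order ODEs for the variables $\lambda_i$, $\mu_i$, and then match it termwise with the Hamiltonian flow \eqref{Sys:Sasano}–\eqref{Ham:Sasano} after the substitution $q_i=(c_{2n+2}-\lambda_{2i})/(2n+2)$, $p_i=-\mu_{2i}$, $\alpha_j=(\kappa|\alpha^{\vee}_j)/(2n+2)$. Concretely I would first write $B=\mathbf{v}+\sum_{i\in\mathcal{I}}\mathbf{x}^{+}_ie_i+\sum_{i\in\mathcal{I}}\mathbf{y}^{+}_if_i+B_1^{+}$ in the same shape as $\mathcal{B}^{+}$, with $\mathbf{v}\in\mathfrak{h}$, and record how the constituents are built out of the algebraic functions $b_j$, $c_1$, $c_{2n+1}$, $c_{2n+2}$ in $t_1,t_2$ after passing to the single variable $s=s_1$, $s_2=1$. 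Since $d_{\bm t}\kappa=0$, the parameters $\alpha_j$ are constants, and the linear relation $\alpha_0+\alpha_1+\sum_{j=2}^{2n}2\alpha_j+\alpha_{2n+1}+\alpha_{2n+2}=1$ is exactly the normalization of $\vartheta$ divided by $2n+2$, using $(\vartheta|\alpha^{\vee}_i)=0$ for $i\in\mathcal I$ and $=1$ for $i\in\mathcal J$; this needs to be checked but is immediate from Section~\ref{Sec:Lie_alg}.

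Next I would expand the commutator $[s(s-1)\tfrac{d}{ds}-B,\mathcal{M}^{+}]=0$ along the root-space decomposition. The $\alpha^{\vee}_i$-components give $s(s-1)\kappa'=0$ (consistent with $d_{\bm t}\kappa=0$) together with constraint equations expressing the Cartan part $\mathbf v$ of $B$ in terms of $\mu_i$, $\lambda_i$ and $\kappa$; the $e_i,f_i$-components for $i\in\mathcal I$ give the evolution equations for $\mu_i=\varphi^{*}_i$ and for $\eta_i$ (hence for $\lambda_i$, using $\lambda_i=\eta_i-\sum_{j<i}b_{2j+1}$), while the degree-one part of the commutator reproduces the algebraic identities among the $b_j$'s and thus determines $B$. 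By Lemma~\ref{Lem:Borel} the only genuinely new information is the $e_i$- and $f_i$-equations with $i\in\mathcal I$ plus equation \eqref{Gauge_Trf_1}; everything else is either automatically satisfied or defines the coefficients of $B$. I would solve \eqref{Gauge_Trf_1} to eliminate $\psi^{*}_i=\mu_i$-related quantities and obtain a closed first-order system in $(\lambda_{2i},\mu_{2i})_{i=1}^{n}$.

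The final step is the matching. After translating $\lambda_{2i}\mapsto q_i$ and $\mu_{2i}\mapsto p_i$ — an affine map with $s$-dependent shift $c_{2n+2}$, so that $s(s-1)\tfrac{d}{ds}$ acquires the extra terms coming from $c_{2n+2}'$ — I would compute $\{H,q_i\}$ and $\{H,p_i\}$ from \eqref{Ham:Sasano}, expand the coupling sum $\sum_{i<j}2(q_i-s)p_iq_j\{(q_j-1)p_j+\alpha_{2j}\}$, and verify that the coefficients $\beta_{i,0},\beta_{i,1},\beta_{i,3},\beta_{i,4}$ produced by the Lax computation agree with the prescribed expressions in the $\alpha$'s; the relation $\beta_{i,0}+\beta_{i,1}+2\alpha_{2i}+\beta_{i,3}+\beta_{i,4}=1$ should fall out of the same $\vartheta$-normalization. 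I would also confirm that the identifications of $b_1,b_{2n+1},b_{2n+2}$ with $c_1,c_{2n+1},c_{2n+2}$ reproduce $s_1=s$, $s_2=1$ consistently, i.e. that the algebraic functions satisfy the right Riccati-type relations in $s$. The main obstacle I anticipate is bookkeeping: carrying the chain of three gauge transformations $\Gamma_1,\Gamma_2,\Gamma_3$ through to explicit formulas for $\theta^{*}_i$, $\xi^{*}_i$, $\psi^{*}_i$, $\mathbf u^{*}_i$, $\mathbf x^{*}_i$, $\mathbf y^{*}_i$ and then through the shift $\lambda_i=\eta_i-\sum_{j<i}b_{2j+1}$ is a long but mechanical computation, and the real work is organizing it so that the coupling terms in $H$ emerge correctly from the off-diagonal $e_i,f_j$ structure of $\mathcal{M}^{+}$ rather than checking any one identity.
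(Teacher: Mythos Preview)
Your plan is essentially what the paper does: immediately after the theorem it writes out $\mathcal{M}^{+}$ and $B$ explicitly in terms of $q_i,p_i,\alpha_j,s$ (listing all $x_i$, $y_i$, and $(u|\alpha^{\vee}_i)$), so that the equivalence reduces to a direct termwise verification of the commutator equation against the Hamiltonian flow. One small slip in your ansatz: after the $\Gamma_3$ gauge both $\mathcal{M}^{+}$ and $B$ lie in $\mathfrak{b}_{+}$, so $B$ carries no $f_i$ components --- in the paper the $y$-coefficients sit on the degree-one root vectors $e_{0,2}$, $e_{i,i+1}$, $e_{2n,2n+2}$, $e_{j-1,j,j+1}$, not on $f_i$.
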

The operator $\mathcal{M}^{+}$ is described as
\[\begin{split}
	\mathcal{M}^{+} &= \kappa + \sum_{i=0}^{2n+2}(2n+2)\varphi_ie_i
	+ \sum_{i=1}^{2n}e_{i,i+1} + e_{2n,2n+2},
\end{split}\]
We recall that
\[\begin{split}
	&\varphi_0 = \frac{1}{2n+2},\quad \varphi_1 = q_1-s,\quad
	\varphi_{2i+1} = q_{i+1}-q_i\quad (i=1,\ldots,n-1),\\
	&\varphi_{2j} = -\frac{p_j}{2n+2}\quad (j=1,\ldots,n),\quad
	\varphi_{2n+1} = 1-q_n,\quad \varphi_{2n+2} = -q_n.
\end{split}\]
The operator $B$ is described as
\[\begin{split}
	B &= u + \sum_{i=0}^{2n+2}x_ie_i + y_1e_{0,2}
	+ \sum_{i=2}^{2n}y_ie_{i,i+1} + y_{2n+1}e_{2n,2n+2}
	+ \sum_{j\in\mathcal{J}'}y_1e_{j-1,j,j+1},
\end{split}\]
where
\[\begin{split}
	&x_0 = -\frac{q_1-s}{2n+2},\quad x_1 = 1,\quad
	x_{2i+1} = s(s-1)-(q_i-s)(q_{i+1}-s),\\
	&x_{2n+1} = -(s-1)q_n,\quad x_{2n+2} = -s(q_n-1),\\
	&y_1 = -\frac{1}{(2n+2)^2},\quad
	y_{2i} = -\frac{q_{i+1}-s}{2n+2},\quad
	y_{2i+1} = \frac{q_i-s}{2n+2},\\
	&y_{2n} = \frac{s-1}{2n+2},\quad y_{2n+1} = \frac{s}{2n+2},
\end{split}\]
for $i=1,\ldots,n-1$ and
\[\begin{split}
	(2n+2)x_{2i} &= \sum_{j=1}^{i-1}2\{(q_j-s)p_j+\alpha_{2j}\}
	+ (q_i-s)p_i + \alpha_{2i} + \alpha_0 + \sum_{j=1}^{i-1}\alpha_{2j+1},
\end{split}\]
for $i=1,\ldots,n$.
Here $u=\sum_{i=0}^{2n+2}u_i\alpha^{\vee}_i$ satisfies
\[\begin{split}
	(u|\alpha^{\vee}_0) &= -\alpha_0(q_1-s),\\
	(u|\alpha^{\vee}_1) &= - \alpha_0(q_1+s-1)
	- \sum_{j=1}^n2q_j\{(q_j-1)p_j+\alpha_{2j}\}\\
	&\quad - (2\alpha_2+\beta_{1,3})(s-1) - \beta_{1,4}s,\\
	(u|\alpha^{\vee}_{2i+1}) &= -\left\{
	\sum_{j=1}^i2(q_j-s)p_j+\beta_{i,1}+2\alpha_{2i}\right\}(q_i+q_{i+1}-1)
	- \beta_{i+1,4}s\\
	&\quad -\sum_{j=i+1}^n2q_j\{(q_j-1)p_j+\alpha_{2j}\}
	- (2\alpha_{2i+2}+\beta_{i+1,3})(s-1),\\
	(u|\alpha^{\vee}_{2n+1}) &= -\left\{\sum_{j=1}^n2(q_j-s)p_j
	+\beta_{n,1}+2\alpha_{2n}\right\}q_n - \alpha_{2n+1}s,\\
	(u|\alpha^{\vee}_{2n+2}) &= -\left\{\sum_{j=1}^n2(q_j-s)p_j
	+\beta_{n,1}+2\alpha_{2n}\right\}(q_n-1) - \alpha_{2n+2}(s-1),
\end{split}\]
for $i=1,\ldots,n-1$ and
\[\begin{split}
	(u|\alpha^{\vee}_{2i}) &= \left\{\sum_{j=1}^{i-1}2(q_j-s)p_j
	+(q_i-s)p_i+\beta_{i,1}+2\alpha_{2i}\right\}(2q_i-1)\\
	&\quad + q_i\{(q_i-1)p_i+\alpha_{2i}\}
	+ \sum_{j=i+1}^n2q_j\{(q_j-1)p_j+\alpha_{2j}\}\\
	&\quad + (2\alpha_{2i}+\beta_{i+1,3})(s-1) + \beta_{i+1,4}s,\\
\end{split}\]
for $i=1,\ldots,n$, where
\[\begin{split}
	&\beta_{i,0} = \alpha_1 + \sum_{j=1}^{i-1}\alpha_{2j+1},\quad
	\beta_{i,1} = \alpha_0 + \sum_{j=1}^{i-1}2\alpha_{2j}
	+ \sum_{j=1}^{i-1}\alpha_{2j+1},\\
	&\beta_{i,3} = \sum_{j=i}^{n-1}\alpha_{2j+1}
	+ \sum_{j=i+1}^{n}2\alpha_{2j} + \alpha_{2n+1},\quad
	\beta_{i,4} = \sum_{j=i}^{n-1}\alpha_{2j+1} + \alpha_{2n+2}.
\end{split}\]

\begin{rem}
The system \eqref{Sys:Sasano} with \eqref{Ham:Sasano} is derived from
a Lax pair associated with the loop algebra
$\mathfrak{so}(4n+4)[z,z^{-1}]${\rm;} see Appendix \ref{Sec:Lax}.
\end{rem}

\section{Affine Weyl group symmetry}\label{Sec:Aff_Wey}

In this section, we discuss a derivation of the group of symmetries
\eqref{Aff_Wey_Sym} following the manner in \cite{NY2}.

Recall that the affine Weyl group of type $D^{(1)}_{2n+2}$ is generated by
the transformations $r_i$ $(i=0,\ldots,2n+2)$ with the fundamental relations
\[\begin{array}{ll}
	r_i^2=1& (i=0,\ldots,2n+2),\\[4pt]
	(r_ir_j)^{2-a_{ij}}=0& (i,j=0,\ldots,2n+2; i\neq j).
\end{array}\]
acting on the simple roots as
\[
	r_i(\alpha_j) = \alpha_j - a_{ij}\alpha_i\quad (i,j=0,\ldots,2n+2),
\]
where
\[\begin{array}{llll}
	a_{ii}=2& (i=0,\ldots,2n+2),\\[4pt]
	a_{02}=a_{ii+1}=a_{2n2n+2}=-1& (i=1,\ldots,2n),\\[4pt]
	a_{ij}=0& (\text{otherwise}).
\end{array}\]

Let $X(0)\in G_{<0}G_{\geq0}$.
We consider a $G_{<0}G_{\geq0}$-valued function
\[
	X = X(t_1,t_2,\ldots)
	= \exp\left(\sum_{k=1,2,\ldots}t_k\Lambda_k\right)X(0).
\]
Then we have a system of partial differential equations
\[
	X\partial_kX^{-1} = \partial_k- \Lambda_k\quad (k=1,2,\ldots),
\]
defined through the adjoint action of $G_{<0}G_{\geq0}$ on
$\wh{\mathfrak{g}}_{<0}\oplus\mathfrak{g}_{\geq0}$.
Via a decomposition
\[
	X = W^{-1}Z,\quad W\in G_{<0},\quad Z\in G_{\geq0},
\]
we obtain the system \eqref{Eq:Sato}.

In the previous section, we have considered the gauge transformation
\[
	\mathcal{M}^{+} = \exp(\mathrm{ad}(\Gamma))\mathcal{M},\quad
	d_{\bm{t}}-\mathcal{B}^{+}
	= \exp(\mathrm{ad}(\Gamma))(d_{\bm{t}}-\mathcal{B}),\quad
	\Gamma\in\mathfrak{g}_0,
\]
for the derivation of the system \eqref{Sys:Sasano}.
Note that it arises from
\[
	X = (W^{+})^{-1}Z^{+},\quad W^{+} = \exp(\Gamma)W,\quad
	Z^{+} = \exp(\Gamma)Z.
\]

Consider transformations
\[
	r_i(X) = X\exp(-e_i)\exp(f_i)\exp(-e_i)\quad (i=0,\ldots,2n+2).
\]
Under the similarity condition $\mathcal{M}^{+}\in\mathfrak{b}_{+}$,
their action on $W^{+}$ is given by
\[
	r_i(W^{+}) = G_iW^{+}\quad (i=0,\ldots,2n+2),
\]
where
\[
	G_i = \exp\left(\frac{\alpha_i}{\varphi_i}f_i\right),\quad
	\alpha_i = \frac{(\alpha_i^{\vee}|\mathcal{M}^{+})}{2n+2},\quad
	\varphi_i = \frac{(f_i|\mathcal{M}^{+})}{2n+2}.
\]
It follows that
\[
	r_i(\mathcal{M}^{+}) = G_i\mathcal{M}^{+}G_i^{-1},\quad
	d_{\bm{t}}-r_i(\mathcal{B}^{+}) = G_i(d_{\bm{t}}-\mathcal{B}^{+})G_i^{-1},
\]
for $i=0,\ldots,2n+2$.
Then each $r_i(\mathcal{M}^{+})$ and $r_i(\mathcal{B}^{+})$ are
$\mathfrak{b}_{+}$-valued and satisfy the system \eqref{Sim_Red_Borel}.
Note that the complex parameters $\alpha_i$ $(i=0,\ldots,2n+2)$ can be
regarded as the simple roots for $\mathfrak{g}(D^{(1)}_{2n+2})$.

We define a Poisson structure for the operator $\mathcal{M}^{+}$ by
\[
	\{\varphi_i,\varphi_j\} = \frac{([f_j,f_i]|\mathcal{M}^{+})}{2n+2}\quad
	(i,j=0,\ldots,2n+2).
\]
It is equivalent to
\[
	\{p_i,q_j\} = \delta_{i,j},\quad \{p_i,p_j\} = \{q_i,q_j\} = 0\quad
	(i,j=1,\ldots,n).
\]
Hence $p_i$, $q_i$ $(i=1,\ldots,n)$ give a canonical coordinate system
associated with the Poisson structure for $\mathcal{M}^{+}$.
Then the action of the transformations $r_i$ $(i=0,\ldots,2n+2)$ on the
coefficients of $\mathcal{M}^{+}$ is equivalent to \eqref{Aff_Wey_Sym}.

\begin{rem}[\cite{S}]
Let
\[
	\sigma_1(i) = (0,1)(2n+1,2n+2)i,\quad \sigma_2(i) = 2n+2-i\quad
	(i=0,\ldots,2n+2),
\]
where $(0,1)$ and $(2n+1,2n+2)$ stand for the adjacent transpositions.
Then the system \eqref{Sys:Sasano} with \eqref{Ham:Sasano} is invariant
under the action of transformations $\pi_1$ and $\pi_2$ defined by
\[\begin{split}
	\pi_1(\alpha_i) = \alpha_{\sigma_1(i)},\quad
	\pi_1(q_i) = \frac{s(q_i-1)}{q_i-s},\quad
	\pi_1(p_i) = \frac{(q_i-s)\{p_i(q_i-s)+\alpha_{2i}\}}{s(1-s)},
\end{split}\]
and
\[
	\pi_2(\alpha_i) = \alpha_{\sigma_2(i)},\quad
	\pi_2(q_i) = \frac{s}{q_i},\quad
	\pi_2(p_i) = -\frac{q_i(q_ip_i+\alpha_{2i})}{s},
\]
for $i=0,\ldots,2n+2$.
These transformations generate a group of Dynkin diagram automorphisms
of type $D^{(1)}_{2n+2}$.
In fact, they satisfy the fundamental relations
\[\begin{array}{ll}
	\pi_i^2=1& (i=1,2),\\[4pt]
	(\pi_1\pi_2)^3=1,\\[4pt]
	\pi_ir_j=r_{\sigma_i(j)}\pi_i& (i=1,2; j=0,\ldots,2n+2).
\end{array}\]
\end{rem}

\appendix

\section{Heisenberg subalgebra}\label{Sec:Hei_subalg}

We first introduce the simple Lie algebra $\mathfrak{so}(4n+4)$ and
its loop algebra.
Denoting matrix units by
$E_{i,j}=\left(\delta_{i,k}\delta_{j,l}\right)_{k,l=1}^{4n+4}$, we set
\[
	J = \sum_{i=1}^{4n+4}E_{i,4n+5-i}.
\]
Then the algebra $\mathfrak{so}(4n+4)$ is defined by
\[
	\mathfrak{so}(4n+4) = \left\{X\in\mathrm{Mat}(4n+4;\mathbb{C})\bigm|
	JX+{}^tXJ=0\right\}.
\]
Also let $E_j$, $F_j$, $H_j$ $(j=0,\ldots,2n+2)$ be the Chevalley generators
for the loop algebra $\mathfrak{so}(4n+4)[z,z^{-1}]$ defined by
\[\begin{split}
	&E_0=zX_{4n+3,1},\quad E_i=X_{i,i+1},\quad E_{2n+2}=X_{2n+1,2n+3},\\
	&F_0=\frac{1}{z}X_{1,4n+3},\quad F_i=X_{i+1,i},\quad
	F_{2n+2}=X_{2n+3,2n+1},\\
	&H_0=-X_{1,1}-X_{2,2},\quad H_i=X_{i,i}-X_{i+1,i+1},\\
	&H_{2n+2}=X_{2n+1,2n+1}+X_{2n+2,2n+2},
\end{split}\]
for $i=1,\ldots,2n+1$, where $X_{i,j}=E_{i,j}-E_{4n+5-j,4n+5-i}$.
Note that
\[
	H_0 + H_1 + \sum_{i=2}^{2n}2H_i + H_{2n+1} + H_{2n+2} = 0.
\]

Under a specialization $K=0$, we can identify this loop algebra with the
affine Lie algebra $\mathfrak{g}(D^{(1)}_{2n+2})$.
Note that the scaling element $d$ corresponds to the differential operator
$z\partial_z$.
We also remark that
\[
	[X,Y] = XY-YX,\quad (X|Y) = \frac{1}{2}\mathrm{tr}XY.
\]

In a similar manner as \cite{DF}, we formulate the Heisenberg subalgebra
of type $(1,1,0,1,0,\ldots,1,0,1,1)$ in a flamework of
$\mathfrak{so}(4n+4)[z,z^{-1}]$.
Let $\Lambda_{1,i}$ $(i=1,2)$ be matricies defined by
\[\begin{split}
	\Lambda_{1,1} &= E_0 + [E_1,E_2]
	+ \sum_{j\in\mathcal{J}'}(E_j+[E_{j-1},[E_j,E_{j+1}]])
	+ E_{2n+1} + [E_{2n},E_{2n+2}],\\
	\Lambda_{1,2} &= E_1 + [E_0,E_2]
	+ \sum_{j\in\mathcal{J}'}([E_{j-1},E_j]+[E_j,E_{j+1}])
	+ E_{2n+2} + [E_{2n},E_{2n+1}].
\end{split}\]
Note that $[\Lambda_{1,1},\Lambda_{1,2}]=0$.
We also set
\[
	\Lambda_{(2n+2)k+l,i} = z^k(\Lambda_{1,i})^l\quad
	(i=1,2; k\in\mathbb{Z}; l=1,3,\ldots,2n+1).
\]
Then we have a maximal nilpotent subalgebra $\bigoplus_{k\in\mathbb{Z}}
\left(\mathbb{C}\Lambda_{2k-1,1}\oplus\mathbb{C}\Lambda_{2k-1,2}\right)$
of $\mathfrak{so}(4n+4)[z,z^{-1}]$.
It can be identified with the Heisenberg subalgebra $\mathfrak{s}$ given
in Section \ref{Sec:Lie_alg} under the specialization $K=0$.

\begin{rem}
The isomorphism classes of the Heisenberg subalgebras are in one-to-one
correspondence with the conjugacy classes of the finite Weyl group
{\rm\cite{KP}}.
In the notation of {\rm\cite{C}}, the Heisenberg subalgebra $\mathfrak{s}$
introduced above corresponds to the regular primitive conjugacy class
$D_{2n+2}(a_n)$ of the Weyl group of type $D_{2n+2}$.
\end{rem}

\section{Lax pair}\label{Sec:Lax}

It is known that $P_{\rm{VI}}$ is derived from the Lax pair associated
with the loop algebra $\mathfrak{so}(8)[z,z^{-1}]$ \cite{NY3}.
In this section, we propose a Lax pair for the system \eqref{Sys:Sasano}
with \eqref{Ham:Sasano} in a flamework of $\mathfrak{so}(4n+4)[z,z^{-1}]$.

In the previous section, we have derived the system \eqref{Sim_Red_ODE}.
It can be identified with the system on
$\mathfrak{so}(4n+4)[z,z^{-1}]$
\begin{equation}\label{Sim_Red_ODE_Mat}
	\left[s(s-1)\frac{d}{ds}-B,z\frac{z}{dz}+M\right] = 0,
\end{equation}
where
\[\begin{split}
	M &= \sum_{i=0}^{2n+2}\vep_iH_i + \sum_{i=0}^{2n+2}\varphi_iE_i
	+ \sum_{i=1}^{2n}\frac{[E_i,E_{i+1}]}{2n+2}
	+ \frac{[E_{2n},E_{2n+2}]}{2n+2},\\
	B &= \sum_{i=0}^{2n+2}u_iH_i + \sum_{i=0}^{2n+2}x_iE_i + y_1[E_0,E_2]
	+ \sum_{i=2}^{2n}y_i[E_i,E_{i+1}]\\
	&\quad + y_{2n+1}[E_{2n},E_{2n+2}]
	+ \sum_{j\in\mathcal{J}'}y_1[E_{j-1},[E_j,E_{j+1}]],
\end{split}\]
under the specialization $K=0$.
Here $\vep_i$ $(i=0,\ldots,2n+2)$ are complex parameters such as
\[\begin{split}
	&\alpha_0 = 1+2\vep_0-\vep_2,\quad \alpha_1 = 2\vep_1-\vep_2,\quad
	\alpha_2 = -\vep_0-\vep_1+2\vep_2-\vep_3,\\
	&\alpha_i = -\vep_{i-1}+2\vep_i-\vep_{i+1}\quad (i=3,\ldots,2n-1),\\
	&\alpha_{2n} = -\vep_{2n-1}+2\vep_{2n}-\vep_{2n+1}-\vep_{2n+2},\\
	&\alpha_{2n+1} = -\vep_{2n}+2\vep_{2n+1},\quad
	\alpha_{2n+2} = -\vep_{2n}+2\vep_{2n+2}.
\end{split}\]

Consider a system of linear differential equations
\begin{equation}\label{Lax}
	s(s-1)\frac{d\bm{w}}{ds} = B\bm{w},\quad
	z\frac{d\bm{w}}{dz} + M\bm{w} = 0,
\end{equation}
for a vector of unknown functions $\bm{w}={}^t(w_1,\ldots,w_{4n+4})$.
Then the system \eqref{Sim_Red_ODE_Mat} can be regarded as the
compatibility condition of \eqref{Lax}.
In this flamework, the group of symmetries \eqref{Aff_Wey_Sym} arise from
gauge transformations
\[
	r_i(\bm{w}) = \left(1+\frac{\alpha_i}{\varphi_i}F_i\right)\bm{w}\quad
	(i=0,\ldots,2n+2).
\]
Note that the Lax pair \eqref{Lax} of the case $n=1$ is equivalent to
one of \cite{NY3}.

The Lax pair \eqref{Lax} arises from the Drinfeld-Sokolov hierarchy
as follows.
Under the system \eqref{Eq:Sato}, we consider a $G_{<0}G_{\geq0}$-function
$\Psi=\Psi(t_1,t_2,\ldots)$ defined by
\[
	\Psi = W\exp\left(\sum_{k=1,2,\ldots}t_k\Lambda_k\right).
\]
Then we obtain
\begin{equation}\label{Lax_DS}
	\Psi\partial_{t_k}\Psi^{-1} = \partial_{t_k}-B_k\quad
	(k=1,2,\ldots),\quad
	\Psi\vartheta\Psi^{-1} = \mathcal{M}.
\end{equation}
Note that the system \eqref{ZS_DS} can be regarded as the compatibility
condition of \eqref{Lax_DS}.
In the following, we use a conventional form of \eqref{Lax_DS}
\[
	\partial_{t_k}(\Psi) = B_k\Psi\quad (k=1,2,\ldots),\quad
	\vartheta(\Psi) = (\vartheta-\mathcal{M})\Psi.
\]
It is equivalent to
\begin{equation}\label{Lax_Sim_Red}
	\partial_{t_k}(\Psi) = B_k\Psi\quad (k=1,2),\quad
	\vartheta(\Psi) = (t_1B_1+t_2B_2)\Psi,
\end{equation}
under the specialization $\mathcal{M}\in\mathfrak{g}_{\geq0}$ and
$t_k=0$ $(k\geq3)$.
Via a gauge transformation $\Psi^{+}=\exp(\Gamma)\Psi$, the system
\eqref{Lax_Sim_Red} is transformed into
\begin{equation}\label{Lax_Borel}
	\partial_{s_k}(\Psi^{+}) = B^{+}_k\Psi^{+}\quad (k=1,2),\quad
	\vartheta(\Psi^{+}) = (\vartheta-\mathcal{M}^{+})\Psi^{+},
\end{equation}
where $B^{+}_k$ $(k=1,2)$ are defined by
$\mathcal{B}^{+}=B^{+}_1ds_1+B^{+}_2ds_2$ and
\[
	s_1 = \frac{c_{2n+2}-c_1}{2n+2},\quad
	s_2 = \frac{c_{2n+2}-c_{2n+1}}{2n+2}.
\]
The system \eqref{Lax_Borel} can be identified with \eqref{Lax} under
the specialization $s_2=1$ and $K=0$.

\section*{Acknowledgement}
The authers are grateful to Professors Masatoshi Noumi, Yasuhiko Yamada and
Yusuke Sasano for valuable discussions and advices.


\end{document}